\newtheorem{theorem}{Theorem}
\newtheorem{remark}{Remark}
\newtheorem{assumption}{Assumption}
\begin{document}

\title{Supervised Learning for Stochastic Optimal Control}

\author{Vince Kurtz and Joel W. Burdick

\thanks{The authors are with the Department of Civil and Mechanical Engineering, California Institute of Technology, \texttt{\{vkurtz,jwb\}@caltech.edu}}}

\maketitle
\thispagestyle{empty}

\begin{abstract}
    Supervised machine learning is powerful. In recent years, it has enabled massive breakthroughs in computer vision and natural language processing. But leveraging these advances for optimal control has proved difficult. Data is a key limiting factor. Without access to the optimal policy, value function, or demonstrations, how can we fit a policy? In this paper, we show how to automatically generate supervised learning data for a class of continuous-time nonlinear stochastic optimal control problems. In particular, applying the Feynman-Kac theorem to a linear reparameterization of the Hamilton-Jacobi-Bellman PDE allows us to sample the value function by simulating a stochastic process. Hardware accelerators like GPUs could rapidly generate a large amount of this training data. With this data in hand, stochastic optimal control becomes supervised learning.
\end{abstract}

\section{Introduction}\label{sec:intro}

Supervised learning, which is essentially regression with large datasets, has revolutionized computer vision and natural language processing. Tasks that were recently considered nearly impossible---like identifying objects in a scene \cite{fu2019dual}, generating text \cite{vaswani2017attention}, and producing photo-realistic images \cite{song2020score}---are now essentially solved. These advances are being rapidly integrated in consumer products, and have launched a multi-billion dollar ``Artificial Intelligence'' industry \cite{rudnitsky2023chat}. 

Optimal control and robotics have not seen a similar explosion in progress, despite broad interest in leveraging machine learning \cite{bertsekas2019reinforcement, ibarz2021train, peters2019data}. Reinforcement learning (RL), the most well-known attempt to do so, has produced impressive demonstrations \cite{chen2023visual, hoeller2024anymal}, but remains brittle and difficult to use effectively. State-of-the-art methods like Proximal Policy Optimization (PPO) \cite{schulman2017proximal} are sensitive not only to major elements like reward specification, but also to minor implementation details and even the random seed used for training \cite{engstrom2019implementation, andrychowicz2020matters}. Blog posts with titles like ``The 37 Implementation Details of PPO" are currently essential reading for would-be RL practitioners \cite{huang202237}. 

What makes machine learning for control so difficult? One reason is data. Computer vision and natural language processing have massive image and text datasets from the internet. For control, it is less obvious where to get the data. One option is expert (e.g., human) demonstrations. While this approach is unlikely to generate internet-scale training data, diffusion policy methods show significant promise along this path, particularly for dexterous manipulation \cite{chi2023diffusion}.

For tasks that are difficult to demonstrate, options are more limited. On the one hand, simulation data is readily available. GPU-based simulators like Isaac Sim \cite{makoviychuk2021isaac}, Brax \cite{freeman2021brax}, and MuJoCo XLA \cite{mjx} enable massively parallel simulation of complex systems. Indeed, these high-throughput simulators underpin much recent progress in RL \cite{rudin2022learning}. But supervised learning cannot be applied to simulation data directly. RL is not supervised learning, and as such does not admit even the modest local convergence guarantees that accompany supervised learning \cite{bottou2018optimization}. This leads to brittleness in practice, as the coupled process of data collection and optimization can diverge. In contrast, supervised learning tends to be more robust to implementation details and hyperparameter choices. 

\begin{figure}
    \centering
    \includegraphics[width=0.75\linewidth]{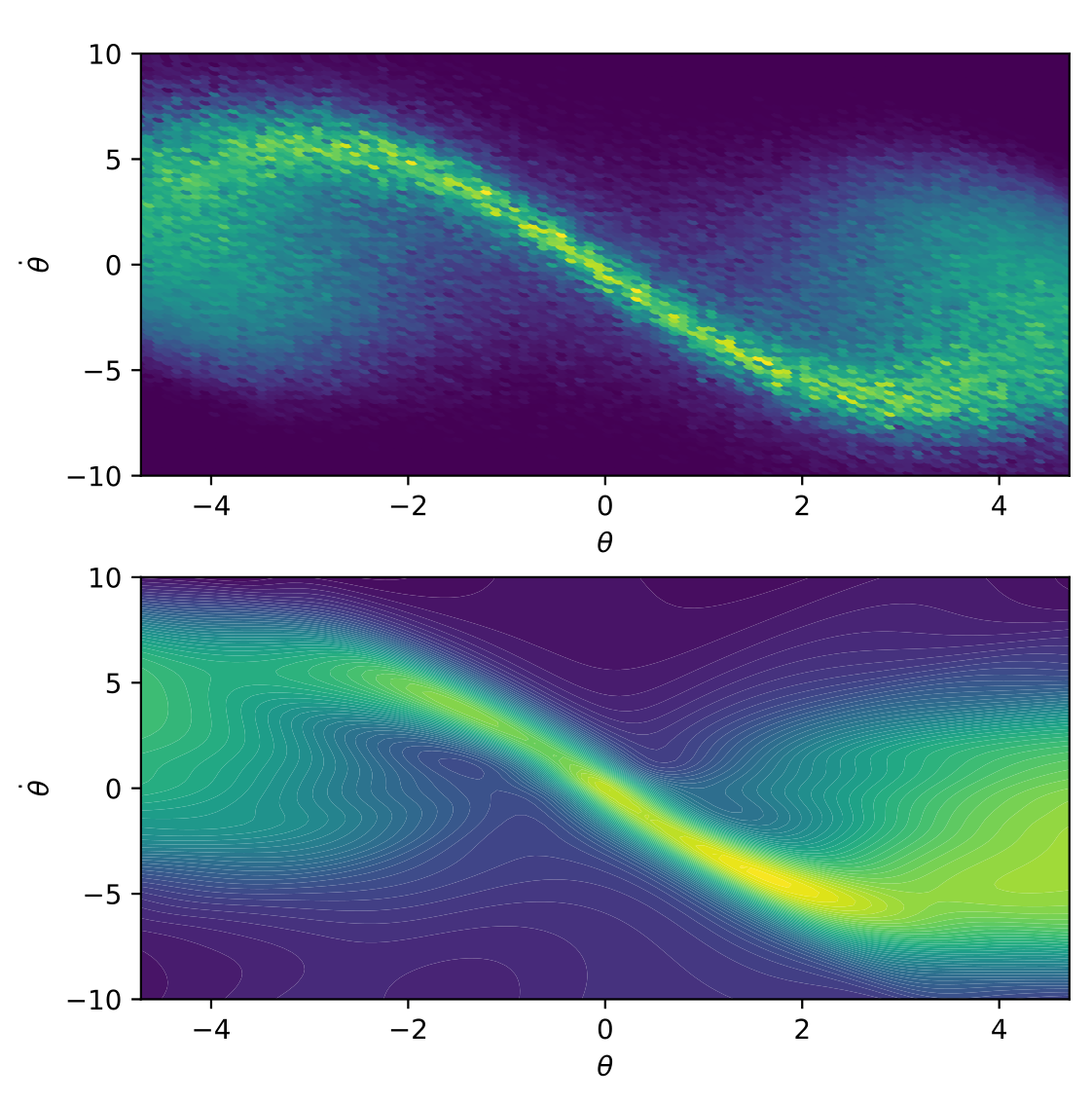}
    \caption{Training data (top) and neural network fit (bottom) for an inverted pendulum swing up task. Light yellow indicates a high desirability score---a transformation of the value function. Training data is generated using only simulations with random inputs, but exhibits the characteristic ``swirl'' of the value function for this nonlinear system. }
    \label{fig:front_page}
\end{figure}

In this paper, we cast stochastic optimal control as supervised learning. In particular, we show how to generate training data for a class of continuous-time nonlinear stochastic optimal control problems. This training data is produced offline using only simulations with random inputs: no expert policy or demonstrations are needed. Once we have this training data, policy optimization becomes a regression problem. 

Our basic idea starts with the Hamilton-Jacobi-Bellman (HJB) equation, a nonlinear partial differential equation (PDE) defining the value function. We then reformulate the HJB equation as a linear PDE using the desirability transformation \cite{kappen2005linear, todorov2009efficient, leong2016linearly}. The parabolic structure of this desirability PDE allows us to sample the desirability score via Monte-Carlo simulation of a stochastic process, via the Feynman-Kac theorem. These desirability samples are the training data. We then use these samples to fit a desirability model, from which we can obtain the value function and therefore the optimal policy. 

The remainder of the paper is organized as follows. Section~\ref{sec:related} reviews related work and Section~\ref{sec:problem} presents a formal problem statement. We detail how to obtain training data in Section~\ref{sec:training_data} and present a policy optimization algorithm that uses this data in Section~\ref{sec:supervised}. Section~\ref{sec:examples} provides two small examples, and we conclude with Section~\ref{sec:conclusion}.

\section{Related Work}\label{sec:related}

Most learning-based optimal control methods assume discrete-time dynamics \cite{schulman2017proximal, lillicrap2015continuous}. These and similar RL techniques have produced incredibly impressive results on difficult optimal control problems like robot locomotion \cite{rudin2022learning, hoeller2024anymal} and dexterous manipulation \cite{chen2023visual}, but remain brittle to hyperparameter tuning and difficult to apply effectively to new systems \cite{ibarz2021train, engstrom2019implementation, andrychowicz2020matters}.

A less popular but nonetheless promising set of alternatives focus on continuous time, where structure in the system dynamics and optimality conditions becomes more clear. Prominent examples include continuous fitted value iteration \cite{lutter2021value} and robust variants \cite{lutter2021robust}. Other work in this line attempts to embed the structure of the HJB equation in the value approximation of discrete-time methods like PPO \cite{mukherjee2023bridging}.

Solving the HJB equation is, unsurprisingly, a unifying theme \cite{shilova2023revisiting}. If we can solve the HJB PDE for the value function, we immediately obtain an optimal controller. Value iteration is the classical method for doing so \cite{mitchell2005time}, but while convergence guarantees for discrete action spaces \cite{tsitsiklis1996analysis} or linear function approximators \cite{munos2008finite} are available, no such guarantees are available for general-purpose approximators like neural networks. In practice, value iteration can not only converge to a poor local minimum, but even diverge entirely. 

Alternative attempts to solve the HJB PDE include the application of physics-informed neural networks, finite element methods, and other numerical techniques \cite{shilova2023revisiting, lutter2020hjb, schiassi2022bellman, mitchell2005toolbox}. These aim to solve HJB directly, but the PDE's nonlinearity presents a serious challenge, especially in high-dimensions.

In the stochastic case, the nonlinear HJB can be linearized via a log transformation \cite{kappen2005linear,todorov2009efficient}. The resulting PDE is still high-dimensional and second order, but its linearity enables solutions via a new set of numerical techniques. These include sum-of-squares optimization \cite{leong2016linearly}, tensor methods \cite{horowitz2014linear}, and path integral techniques \cite{theodorou2010generalized}. Path integral methods are a particular source of inspiration for this work, as they also draw on the Feynman-Kac theorem to derive an optimal controller. They tend to focus on online optimization \cite{williams2016aggressive}, unsupervised policy learning \cite{theodorou2010generalized}, or RL with motion primitives \cite{stulp2012reinforcement}, however, rather than offline supervised learning.

\section{Problem Statement}\label{sec:problem}

In this paper, we consider continuous-time stochastic systems with control-affine dynamics of the form
\begin{equation}\label{eq:dyanmics}
    dx_t = ( f(x_t) + G(x_t)u_t )dt + B(x_t)d\omega_t,
\end{equation}
where $x_t \in \Omega \subset \mathbb{R}^n$ is the state at time $t$, and evolves on the compact domain $\Omega$. $u_t \in \mathbb{R}^m$ is the control input, and $\omega_t \in \mathbb{R}^l$ is Brownian noise with covariance $\Sigma_\varepsilon$, i.e., $\omega_t - \omega_s \sim \mathcal{N}(0, \Sigma_\varepsilon(t-s))$. We assume that $f$, $G$, and $B$ are Lipschitz. 

Our goal is to find a control policy, $u(t)$ for $t\in[0,T]$, that minimizes the cost
\begin{equation}\label{eq:cost}
    J(x_{0:T}, u_{0:T}) = \mathbb{E}_{\omega_t}\left[\phi(x_T) + \int_{0}^T \ell(x_t) + \frac{1}{2}u_t^TRu_t ~dt \right]
\end{equation}
over a fixed time horizon $T$, where $\ell : \Omega \to \mathbb{R}_+$ and $\phi : \Omega \to \mathbb{R}_+$ are Lipschitz running and terminal costs, and $R \in \mathbb{S}_{++}^m$ is a symmetric positive definite control cost matrix. 

We make the following assumption about the noise $\omega_t$:
\begin{assumption}\label{assume:noise}
    There exists a $\lambda \in \mathbb{R}_+$ such that for all $x \in \Omega$,
    \begin{equation}\label{eq:noise_assumption}
        \lambda G(x)R^{-1}G(x)^T = B(x)\Sigma_\varepsilon B(x)^T \triangleq \Sigma(x).
    \end{equation}
\end{assumption}
This assumption, which is common in the path integral control literature \cite{theodorou2010generalized, williams2016aggressive, stulp2012reinforcement}, states that the system must have sufficient control authority along state dimensions that are impacted by noise. It also ensures that we can scale $B(x)$ by injecting noise in $u$. This renders $\lambda$ a free hyperparameter.

\section{Generating Training Data}\label{sec:training_data}

In this section, we show how to cast the optimal control problem (\ref{eq:cost}) as a supervised learning (regression) problem. Once we have a regression problem, we can apply any number of powerful machine learning tools. 

We begin with the value function
\begin{equation}\label{eq:value}
    V(x_t, t) =  \min_{u_{t:T}} J(x_{t:T}, u_{t:T}),
\end{equation}
or optimal cost-to-go. For our system \eqref{eq:dyanmics}, the optimal policy 
\begin{equation}\label{eq:value_policy}
    u = -R^{-1} G(x)^T \nabla_x V(x, t)
\end{equation}
is easily found from the value function \eqref{eq:value}, if $V$ is available in closed form. This value function is the viscosity solution of the HJB equation \cite{yong1997relations}
\begin{multline}\label{eq:hjb}
    -\partial_t V = \ell + f^T\nabla_x V - \frac{1}{2}\nabla_x V^T G R^{-1} G^T \nabla_x V \\
    + \frac{1}{2}Tr\left( (\nabla_{xx}V) B \Sigma_\varepsilon B^T\right),
\end{multline}
with boundary condition
\begin{equation}\label{eq:hjb_boundry}
    V(x_T, t) = \phi(x_T).
\end{equation}
This second-order nonlinear PDE is notoriously difficult to solve, even numerically \cite{shilova2023revisiting}. 

Fortunately, under Assumption~\ref{assume:noise}, it is possible to transform \eqref{eq:hjb} into a linear PDE. We do so by way of the \textit{desirability} $\Psi$, which is related to the value function as
\begin{equation}\label{eq:desirability_defn}
    V = - \lambda \log \Psi.
\end{equation}
Intuitively, $\Psi$ compresses $V$ from $[0, \infty)$ to $[0, 1]$. 

\begin{remark}
    The fact that $\Psi$ takes values in $[0, 1]$ makes it an appealing target for supervised learning with neural networks, where normalization is important. Standard value function approximations are not normalized in this way, and may suffer as a result: recent work suggests that using classification---which involves outputs in $[0, 1]$---outperforms standard regression techniques for RL \cite{farebrother2024stop}.
\end{remark}

The desirability PDE obtained from \eqref{eq:hjb} and \eqref{eq:desirability_defn} is given by
\begin{equation}\label{eq:desirability_pde}
    - \partial_t \Psi = - \frac{1}{\lambda} \ell \Psi + f^T \nabla_x \Psi + \frac{1}{2}Tr\left((\nabla_{xx}\Psi)\Sigma\right),
\end{equation}
with boundary condition
\begin{equation}\label{eq:desirability_boundry}
    \Psi(x_T, t) = \exp\left(-\frac{1}{\lambda} \phi(x_T)\right) \triangleq \psi(x_T).
\end{equation}
This new PDE is linear, but a closed-form solution is still not available in general. However, the particular structure of \eqref{eq:desirability_pde} lends itself to a useful set of numerical techniques. In particular, \eqref{eq:desirability_pde} is a special case of the Feynman-Kac PDE
\begin{multline}\label{eq:fk_pde}
    \partial_t \nu(x, t) + c(x, t)\nu(x, t) + b(x, t)^T \nabla_x \nu(x, t)  \\ 
    + \frac{1}{2}Tr\left((\nabla_{xx}\nu) \sigma(x) \sigma^T(x)\right) + h(x, t) = 0,
\end{multline}
with boundary condition
\begin{equation}\label{eq:fk_boundary}
    \nu(x, T) = \varphi(x),
\end{equation}
where $\nu : \mathbb{R}^n \times [0, T] \to \mathbb{R}$. Other special cases include the Schr{\"o}dinger equation in quantum mechanics \cite{glimm2012quantum} and the Black-Scholes-Merton equation in mathematical finance \cite{brandimarte2013numerical}. 

The Feynman-Kac theorem relates the solution $\nu(x, t)$ of this PDE to the expected value of a stochastic process:
\begin{theorem}[Feynman-Kac]\label{thm:fk}
    Let $X_t$ be the solution to the stochastic differential equation
    \begin{equation}\label{eq:fk_sde}
    \begin{gathered}
        dX_s = b(X_s, s)ds + \sigma(X_s, s)dW_s, \\
        X_0 = x,
    \end{gathered}
    \end{equation}
    where $x \in \mathbb{R}^n$, $s \in [0, T]$, and $W$ is standard Brownian noise. Then the viscosity solution of (\ref{eq:fk_pde},~\ref{eq:fk_boundary}) is given by 
    \begin{multline}\label{eq:fk_expectation}
        \nu(x, t) = \mathbb{E}\Big[ \int_{t}^T h(X_s, s) e^{\int_t^s c(X_r, r)dr} ds + \\
        \varphi(X_T) e^{\int_t^T c(X_r, r)dr} \Big].
    \end{multline}
    Furthermore, if Eq.s (\ref{eq:fk_pde}) and (\ref{eq:fk_boundary}) admit a classical solution, then \eqref{eq:fk_expectation} provides that classical solution.
\end{theorem}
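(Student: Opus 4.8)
The plan is to follow the classical route: first establish the second (easier) claim directly with It\^o's formula, and then deduce the viscosity statement by combining a verification-style argument with a uniqueness result for the linear parabolic problem. \emph{Classical case.} Suppose $\nu \in C^{2,1}$ solves \eqref{eq:fk_pde}--\eqref{eq:fk_boundary}. Fix $(x,t)$ and let $X_s$, $s \in [t,T]$, solve \eqref{eq:fk_sde} started from $X_t = x$ (re-indexing the time variable so the diffusion is launched at time $t$ rather than $0$). Introduce the discount factor $D_s = \exp\bigl(\int_t^s c(X_r,r)\,dr\bigr)$, which obeys $dD_s = c(X_s,s) D_s\,ds$, and consider $M_s = D_s\,\nu(X_s,s) + \int_t^s D_r\,h(X_r,r)\,dr$. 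Applying It\^o's formula to $\nu(X_s,s)$, then the product rule to $D_s\,\nu(X_s,s)$, and collecting the $ds$ terms produces the bracket $\partial_t\nu + b^T\nabla_x\nu + \tfrac{1}{2} Tr\bigl((\nabla_{xx}\nu)\sigma\sigma^T\bigr) + c\,\nu$, which equals $-h$ by \eqref{eq:fk_pde}; hence the finite-variation part of $M_s$ cancels and $dM_s = D_s\,\nabla_x\nu(X_s,s)^T \sigma(X_s,s)\,dW_s$. Thus $M$ is a local martingale, and under the integrability that upgrades this stochastic integral to a true martingale (which holds here since the data are Lipschitz on the compact domain), $\mathbb{E}[M_T] = M_t = \nu(x,t)$. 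Expanding $M_T$ with the terminal condition $\nu(X_T,T) = \varphi(X_T)$ yields exactly \eqref{eq:fk_expectation}.

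\emph{Viscosity case.} Let $w(x,t)$ denote the right-hand side of \eqref{eq:fk_expectation}. The plan is to show that (i) $w$ is a viscosity solution of \eqref{eq:fk_pde}--\eqref{eq:fk_boundary}, and (ii) this problem admits at most one viscosity solution, so that $w = \nu$. For (i), one first checks that $w$ is continuous and attains the terminal data, using continuity of $c$, $b$, $h$, $\varphi$ together with standard estimates on the SDE flow $X_s$. Then, given a smooth test function $\zeta$ touching $w$ from above at $(x_0,t_0)$, the Markov property and the tower rule give $w(x_0,t_0) = \mathbb{E}\bigl[D_{t_0+\delta}\, w(X_{t_0+\delta},\, t_0+\delta) + \int_{t_0}^{t_0+\delta} D_r\,h(X_r,r)\,dr\bigr]$ for small $\delta>0$; substituting $w \le \zeta$, applying It\^o's formula to $D_s\,\zeta(X_s,s)$, dividing by $\delta$, and letting $\delta\to 0$ recovers the subsolution inequality, and symmetrically for supersolutions. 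For (ii), one invokes the comparison principle for linear second-order parabolic equations under the standing Lipschitz and boundedness hypotheses.

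\emph{Main obstacle.} The It\^o computation in the classical case is routine bookkeeping. The real work is in the viscosity part: verifying continuity of $w$ and the dynamic-programming identity, justifying the $\delta\to 0$ passage to the viscosity inequalities, and checking that the comparison principle applies. There are also technical caveats---upgrading the local martingale to a genuine martingale, imposing growth or integrability conditions on $h$ and $c$, and reconciling the fact that \eqref{eq:fk_sde} is posed on $\mathbb{R}^n$ whereas the control problem of interest evolves on the compact set $\Omega$ (handled by a stopping-time localization or by extending the data smoothly off $\Omega$). Since all of these steps are standard, I would present only the It\^o sketch for the classical case and cite a stochastic-control reference for the viscosity statement.
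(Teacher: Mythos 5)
The paper does not actually prove Theorem~\ref{thm:fk}: it states the Feynman--Kac result as a known theorem and defers its derivation entirely to the cited literature (\cite{yong1997relations, kharroubi2015feynman, theodorou2010generalized}), reserving its own short proofs for the corollary-type results (Theorems~\ref{thm:desirability} and~\ref{thm:convergence}). Your proposal therefore goes beyond the paper by supplying the standard textbook argument, and it is correct in outline. The classical-solution half is the usual It\^o/martingale verification: your bookkeeping with $D_s = e^{\int_t^s c(X_r,r)dr}$ and $M_s = D_s\nu(X_s,s) + \int_t^s D_r h(X_r,r)\,dr$ is right, the drift cancellation via \eqref{eq:fk_pde} is exactly the point, and taking expectations with $\nu(X_T,T)=\varphi(X_T)$ gives \eqref{eq:fk_expectation}; you also correctly handle the re-indexing (the paper writes $X_0 = x$ but uses $X_s$ on $[t,T]$, so launching the diffusion at time $t$ is the intended reading). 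The viscosity half (show the candidate $w$ is continuous, satisfies the terminal data and the sub/supersolution inequalities via the Markov property and It\^o on test functions, then conclude by comparison) is the standard route found in the cited references. Two caveats you flag deserve emphasis rather than a parenthetical: the local-to-true-martingale upgrade and the comparison principle both need growth/boundedness hypotheses on $b,\sigma,c,h,\varphi$ because \eqref{eq:fk_sde} is posed on all of $\mathbb{R}^n$ (your aside ``Lipschitz on the compact domain'' conflates the general theorem with the paper's application on $\Omega$; the localization or extension argument you mention is genuinely needed for the theorem as stated). With those hypotheses made explicit, your sketch is a faithful proof of the statement, whereas the paper simply cites it.
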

For further details regarding Feynman-Kac, its derivation, and applications in control and beyond, we refer the interested reader to \cite{yong1997relations, kharroubi2015feynman, theodorou2010generalized} and works cited therein.

In our case, this leads to the following result:
\begin{theorem}\label{thm:desirability}
The optimal desirability score for Eq.s (\ref{eq:fk_pde}) and (\ref{eq:fk_boundary}) is given by 
    \begin{equation}\label{eq:desirability_expectation}
        \Psi(x, t) = \mathbb{E}\left[\psi(X_T)e^{\int_t^T -\frac{1}{\lambda}\ell(X_r)dr}\right]
    \end{equation}
    where $X_t$ is the solution of the SDE
    \begin{equation}\label{eq:desirability_sde}
    \begin{gathered}
        dX_s = f(X_s)ds + G(X_s)\sqrt{\lambda}LdW_s, \\
        X_0 = x,
    \end{gathered}
    \end{equation}
    where $L$ is the Cholesky decomposition of $R^{-1}$ (e.g., $R^{-1} = LL^T$) and $W_s$ is standard Brownian noise.
\end{theorem}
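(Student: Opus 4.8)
The plan is to recognize the desirability PDE \eqref{eq:desirability_pde} with boundary condition \eqref{eq:desirability_boundry} as an instance of the Feynman--Kac PDE \eqref{eq:fk_pde}--\eqref{eq:fk_boundary} and then simply read off the stochastic representation from Theorem~\ref{thm:fk}. The first step is to rewrite \eqref{eq:desirability_pde} in the canonical form of \eqref{eq:fk_pde}: the time derivative must be isolated, so I move everything to one side to get $\partial_t \Psi + f^T\nabla_x\Psi + \tfrac12 \mathrm{Tr}((\nabla_{xx}\Psi)\Sigma) - \tfrac{1}{\lambda}\ell\,\Psi = 0$. Matching term by term against \eqref{eq:fk_pde} gives the identifications $b(x,t) = f(x)$, $c(x,t) = -\tfrac{1}{\lambda}\ell(x)$, $h(x,t) = 0$, and $\varphi(x) = \psi(x) = \exp(-\tfrac{1}{\lambda}\phi(x))$. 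The only nontrivial identification is the diffusion term: I need $\sigma(x)\sigma(x)^T = \Sigma(x)$, where $\Sigma(x)$ was defined in Assumption~\ref{assume:noise}.

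The second step is to pick a valid $\sigma$. Here Assumption~\ref{assume:noise} is exactly what is needed: it gives $\Sigma(x) = \lambda\, G(x) R^{-1} G(x)^T = \lambda\, G(x) L L^T G(x)^T = \big(\sqrt{\lambda}\,G(x)L\big)\big(\sqrt{\lambda}\,G(x)L\big)^T$, using the Cholesky factorization $R^{-1} = LL^T$. So $\sigma(x) = \sqrt{\lambda}\,G(x)L$ is a legitimate square root of $\Sigma(x)$, and the associated SDE \eqref{eq:fk_sde} becomes precisely \eqref{eq:desirability_sde}. (A small point worth noting: the driving noise dimension changes from $l$ to $m$ under this reparameterization, which is the ``injecting noise in $u$'' mechanism mentioned after Assumption~\ref{assume:noise}; one should remark that any matrix square root works for Feynman--Kac since only $\sigma\sigma^T$ enters the generator.)

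The third step is to substitute these identifications into the Feynman--Kac expectation formula \eqref{eq:fk_expectation}. Since $h \equiv 0$, the integral term vanishes, leaving only $\nu(x,t) = \mathbb{E}\big[\varphi(X_T)\,e^{\int_t^T c(X_r,r)\,dr}\big] = \mathbb{E}\big[\psi(X_T)\,e^{\int_t^T -\tfrac{1}{\lambda}\ell(X_r)\,dr}\big]$, which is exactly \eqref{eq:desirability_expectation}. The final bookkeeping item is to check the regularity hypotheses under which Theorem~\ref{thm:fk} applies: $f$, $G$, $B$, $\ell$, $\phi$ are all Lipschitz by assumption, $\Omega$ is compact, and $R \in \mathbb{S}_{++}^m$, so $\sigma(x) = \sqrt{\lambda}\,G(x)L$ is Lipschitz and the coefficients $b$, $c$, $\varphi$ inherit the Lipschitz/boundedness needed for existence, uniqueness, and the viscosity-solution identification.

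I do not expect a serious obstacle: the result is essentially a pattern-match plus Assumption~\ref{assume:noise}. The one place where care is warranted is making sure that $\Psi$ as defined via the value-function transformation \eqref{eq:desirability_defn} genuinely solves \eqref{eq:desirability_pde} in the viscosity sense, so that Feynman--Kac delivers \emph{that} function and not merely some solution; but this is already implied by the derivation of \eqref{eq:desirability_pde} from the HJB equation \eqref{eq:hjb} stated earlier. If one wanted to be fully rigorous about the viscosity-solution transfer through the nonlinear change of variables $V = -\lambda\log\Psi$, one would invoke the stability of viscosity solutions under smooth monotone transformations — but at the level of this paper it suffices to cite the linearization and apply Theorem~\ref{thm:fk} directly.
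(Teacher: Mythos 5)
Your proposal is correct and follows essentially the same route as the paper: identify \eqref{eq:desirability_pde}--\eqref{eq:desirability_boundry} as a special case of \eqref{eq:fk_pde}--\eqref{eq:fk_boundary} with $b=f$, $c=-\tfrac{1}{\lambda}\ell$, $h=0$, $\varphi=\psi$, and $\sigma\sigma^T=\lambda GR^{-1}G^T=(\sqrt{\lambda}GL)(\sqrt{\lambda}GL)^T$ via Assumption~\ref{assume:noise}, then read off \eqref{eq:desirability_expectation} from Theorem~\ref{thm:fk}. The extra remarks on regularity and the viscosity-solution transfer go slightly beyond the paper's proof but do not change the argument.
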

\begin{proof}
    This follows directly from Theorem~\ref{thm:fk}, and the fact that (\ref{eq:desirability_pde},~\ref{eq:desirability_boundry}) is a special case of (\ref{eq:fk_pde},~\ref{eq:fk_boundary}) with
    \begin{align*}
        \nu(x, t) &= \Psi(x, t), \quad
        c(x, t) = - \frac{1}{\lambda}\ell(x), \\
        b(x, t) &= f(x), \quad
        h(x, t) = 0, \quad
        \varphi(x) = \psi(x), \\
        \sigma(x)\sigma(x)^T &= \lambda GR^{-1}G^T
        = \left(\sqrt{\lambda} G L \right)\left(\sqrt{\lambda} G L \right)^T.
    \end{align*}
\end{proof}

Theorem~\ref{thm:desirability} offers a way to generate noisy samples from the desirability function \textit{without knowing the optimal policy}. All we need to do is sample from \eqref{eq:desirability_sde}, record the running and terminal costs, and use these recorded values to estimate \eqref{eq:desirability_expectation}.

Drawing samples from the stochastic process \eqref{eq:desirability_sde} is as simple as simulating the (deterministic) system,
\begin{equation}\label{eq:deterministic_dynamics}
    \dot{x} = f(x) + G(x)u,
\end{equation}
%\note{It is no completely transparent how the covariance $\lambda R^{-1}$ relates to Equation \ref{eq:desirability_sde}}
%
where the inputs $u$ are driven by a zero-mean Wiener process with covariance $\lambda R^{-1}$. Modern simulation software can generate such data extremely rapidly, even for large systems with complex dynamics. For example, MuJoCo can simulate a 27-degree-of-freedom humanoid robot at over 4000 steps per second on a CPU \cite{howell2022predictive}. With the addition of parallel computation on a GPU or TPU, this number increases to several million steps per second \cite{mjx}.

\section{A Supervised Learning Algorithm}\label{sec:supervised}

In this section, we present a simple supervised learning algorithm based on Theorem~\ref{thm:desirability}. For simplicity, we focus on learning the stationary initial policy
\begin{equation}
    \pi(x) = - R^{-1} G(x)^T \nabla_x V(x, 0).
\end{equation}
Applying this policy is similar in spirit to model predictive control (MPC) over a receding horizon. 

%\note{DOes the sampling of the states need to be random, or can it be on a grid?}
% ^ I don't think it matters either way: this sampling is essentially choosing the training data points that we fit the network to

Algorithm~\ref{alg:supervised} summarizes the procedure. We begin by sampling $N$ states $x^i$ densely over $\Omega$. These states will define training data points, and could be sampled randomly or on a grid. For each state, we perform $M$ simulations of \eqref{eq:deterministic_dynamics} with random control inputs. Each of these simulations corresponds to an instance of the stochastic process \eqref{eq:desirability_sde}. For each simulation, we record the running cost $\ell$ and terminal cost $\phi$. 

\begin{algorithm}
    \caption{Supervised Policy Optimization}\label{alg:supervised}
    \KwIn{System \eqref{eq:dyanmics}, cost functional \eqref{eq:cost}, $\theta$, $\lambda$}
    \KwOut{Policy $\pi_\theta : \Omega \to \mathbb{R}^m$}

    $x^i \sim \mathrm{Uniform}(\Omega), \quad i \in [1..N]$
    
    \For{$j \in [1..M]$}{
        Simulate $\forall i$:
        \begin{align*}
            & \dot{x}^{(i,j)}_t = f(x^{(i,j)}_t) + G(x^{(i,j)}_t)u_t, \quad t \in [0, T] \\
            & u_t \sim \mathcal{N}(0, \lambda R^{-1}), \quad x_0^{(i,j)} = x^j.
        \end{align*}
        Average: $\hat{\Psi}^i = \frac{1}{M}\sum_{j=1}^M\psi(x^{(i,j)}_T) e^{-\int_t^T \frac{1}{\lambda}\ell(x^{(i,j)}_r)dr}$
    }

    $\theta \gets \mathrm{argmin}_\theta \frac{1}{N}\sum_{i=1}^N\| \Psi_\theta(x^i) - \hat{\Psi}^i \|^2$
    \vspace{0.3em}

    $\pi_\theta(x) \gets \lambda R^{-1}G(x)^T\frac{\nabla_x\Psi_{\theta}(x)}{\Psi_{\theta}(x)}$
    \vspace{0.1em}

    \KwRet{$\pi$}
\end{algorithm}

With this simulation data, we compute a Monte-Carlo estimate of \eqref{eq:desirability_expectation}, $\hat{\Psi}^i$, at each sample point $x^i$. We then fit a parameterized model $\Psi_\theta(x)$ to this training data by minimizing
\begin{equation}\label{eq:loss_function}
    \mathcal{L}(\theta) = \frac{1}{N}\sum_{i=1}^N\| \Psi_\theta(x^i) - \hat{\Psi}^i \|^2
\end{equation}
with stochastic gradient descent (SGD) \cite{bottou2018optimization}. Any smooth and sufficiently expressive function approximator could be used for the model $\Psi_\theta(x)$. In our examples, we choose $\Psi_\theta(x)$ to be a neural network with weights and biases $\theta$, but other methods (Gaussian process, kernel regression, etc.) could also be used. 

The optimal policy is then approximated as,
\begin{equation}\label{eq:policy}
    \pi(x) \approx \pi_\theta(x) = \lambda R^{-1} G(x)^T \frac{\nabla_x \Psi_\theta(x)}{\Psi_\theta(x)},
\end{equation}
which follows directly from \eqref{eq:value_policy} and \eqref{eq:desirability_defn}.

% \note{Does it make sense to hint that generative techniques might overcome the local minimum problem?  If so, then you can refer to your upcoming ACC paper.}
Of course, the effectiveness of this policy is limited by the degree to which $\Psi_\theta$ is a faithful approximation of the true desirability score $\Psi$. Here there is no free lunch. Even with infinite data and a suitably expressive model, $\mathcal{L}(\theta)$ is non-convex, and we cannot guarantee convergence to $\Psi$.

Nonetheless, there is reason for hope. For one thing, SGD for neural networks has demonstrated impressive performance on complex regression problems in computer vision and natural language processing. These problems are also non-convex, but SGD tends to find good local minima.

Similarly, while we cannot guarantee global convergence to the optimal policy, we can certify local convergence under mild conditions. In particular, consider an SGD procedure \cite[Algorithm 4.1]{bottou2018optimization} where the parameters $\theta$ are updated as
\begin{equation}\label{eq:sgd}
    \theta_{k+1} \gets \theta_k - \alpha_k g(\theta_k, \xi_k)
\end{equation}
at each iteration $k$. Here $\xi_k$ is a random variable representing a seed for batch selection, $\alpha_k > 0$ is a step size, and
\begin{equation}
   \mathbb{E}_{\xi_k}[g(\theta_k, \xi_k)]  = \nabla \mathcal{L}(\theta_k).
\end{equation}

\begin{theorem}\label{thm:convergence}
    Suppose that the loss function \eqref{eq:loss_function} is minimized with SGD \eqref{eq:sgd}, and that the step sizes $\alpha_k$ satisfy
    \begin{equation}
        \sum_{k=1}^\infty \alpha_k = \infty \quad \mathrm{and} \quad \sum_{k=1}^\infty \alpha_k^2 < \infty.
    \end{equation}
    If $\Psi_\theta$ is continuously differentiable, $\nabla_\theta \Psi_\theta$ is Lipschitz continuous, and there exist scalars $M, M_V \geq 0$ such that
    \begin{multline}\label{eq:variance_bound}
        \mathbb{E}_{\xi_k}[\|g(\theta_k, \xi_k)\|^2] - \|\mathbb{E}_{\xi_k}[g(\theta_k, \xi_k)^2]\|^2 \\
        \leq M + M_V \|\nabla\mathcal{L}(\theta_k)\|^2,
    \end{multline}
    then 
    \begin{equation}\label{eq:liminf}
        \liminf_{k \to \infty} \mathbb{E}[\|\nabla\mathcal{L}(\theta_k)\|^2] = 0.
    \end{equation}
    Furthermore, if $\Psi_\theta$ is infinitely differentiable w.r.t. $\theta$, then
    \begin{equation}\label{eq:lim}
        \lim_{k \to \infty} \mathbb{E}[\|\nabla \mathcal{L}(\theta_k)\|^2] = 0.
    \end{equation}
\end{theorem}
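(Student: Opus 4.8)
The statement is, at bottom, an instance of the classical nonconvex stochastic-gradient analysis of \cite{bottou2018optimization}: \eqref{eq:liminf} and \eqref{eq:lim} are exactly the conclusions of that reference's diminishing-step-size theorems, so the work is to check that the loss \eqref{eq:loss_function} and the update \eqref{eq:sgd} meet their hypotheses. The plan is therefore in two parts, (i) verify the structural assumptions on $\mathcal{L}$, and (ii) verify the moment assumptions on the stochastic gradient $g$, after which the two displayed limits follow from the cited analysis.

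For (i), I would first note that $\mathcal{L}(\theta)\ge 0$ for every $\theta$, so $\mathcal{L}$ is bounded below, and then differentiate to get $\nabla_\theta\mathcal{L}(\theta) = \tfrac{2}{N}\sum_{i=1}^N\big(\Psi_\theta(x^i) - \hat{\Psi}^i\big)\nabla_\theta\Psi_\theta(x^i)$, a finite sum of products of the scalar $\Psi_\theta(x^i) - \hat{\Psi}^i$ with the vector $\nabla_\theta\Psi_\theta(x^i)$. The second factor is Lipschitz in $\theta$ by hypothesis; together with the fact that $\Psi_\theta$ is a desirability model valued in $[0,1]$, hence bounded, it is also bounded, so the scalar factor is in turn Lipschitz (as $\Psi_\theta$ is $C^1$ with bounded gradient) and bounded. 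Since a product of bounded Lipschitz maps is Lipschitz and a finite sum of Lipschitz maps is Lipschitz, $\nabla_\theta\mathcal{L}$ is Lipschitz continuous, which is the smoothness assumption required by \cite{bottou2018optimization}.

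For (ii), the assumed unbiasedness $\mathbb{E}_{\xi_k}[g(\theta_k,\xi_k)] = \nabla\mathcal{L}(\theta_k)$ gives at once $\nabla\mathcal{L}(\theta_k)^\top\mathbb{E}_{\xi_k}[g(\theta_k,\xi_k)] = \|\nabla\mathcal{L}(\theta_k)\|^2$ and $\|\mathbb{E}_{\xi_k}[g(\theta_k,\xi_k)]\| = \|\nabla\mathcal{L}(\theta_k)\|$, so the first-moment conditions hold with constants equal to one, and adding $\|\mathbb{E}_{\xi_k}[g(\theta_k,\xi_k)]\|^2 = \|\nabla\mathcal{L}(\theta_k)\|^2$ to the variance bound \eqref{eq:variance_bound} turns it into the second-moment bound $\mathbb{E}_{\xi_k}[\|g(\theta_k,\xi_k)\|^2] \le M + (M_V+1)\|\nabla\mathcal{L}(\theta_k)\|^2$. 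With these pieces in hand, together with $\sum_k\alpha_k = \infty$ and $\sum_k\alpha_k^2 < \infty$, the standard argument proceeds from the $L$-smoothness descent inequality $\mathbb{E}[\mathcal{L}(\theta_{k+1})] - \mathbb{E}[\mathcal{L}(\theta_k)] \le -\big(1 - \tfrac12\alpha_k L(M_V+1)\big)\alpha_k\mathbb{E}[\|\nabla\mathcal{L}(\theta_k)\|^2] + \tfrac12 LM\alpha_k^2$: summing over $k$ and telescoping against the lower bound on $\mathcal{L}$ gives $\sum_k\alpha_k\mathbb{E}[\|\nabla\mathcal{L}(\theta_k)\|^2] < \infty$, and since $\sum_k\alpha_k$ diverges this forces \eqref{eq:liminf}. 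For \eqref{eq:lim}, when $\Psi_\theta$ is $C^\infty$ in $\theta$ the map $\theta\mapsto\|\nabla\mathcal{L}(\theta)\|^2$ has a Lipschitz-continuous gradient (namely $2(\nabla^2\mathcal{L})\nabla\mathcal{L}$, bounded and Lipschitz by the same boundedness reasoning), which is exactly the extra hypothesis that controls the change in $\|\nabla\mathcal{L}\|^2$ over one SGD step; combined with the summability just obtained, this upgrades the $\liminf$ to an honest limit, as in the corresponding theorem of \cite{bottou2018optimization}.

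The step I expect to be the real obstacle is the global boundedness and Lipschitz claims used above for $\nabla_\theta\Psi_\theta$ and, for \eqref{eq:lim}, for the second derivatives of $\Psi_\theta$: a generic neural network need not have a globally bounded gradient or Hessian even when its output is confined to $[0,1]$. Making the theorem's ``mild conditions'' precise therefore means either restricting $\Psi_\theta$ to architectures (or $\theta$ to a compact set, enforced by projection in \eqref{eq:sgd}) for which these bounds hold, or showing that the iterates remain in a bounded region along the trajectory; with any such device in place, the remainder is the routine SGD bookkeeping sketched above.
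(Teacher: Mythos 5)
Your proposal takes essentially the same route as the paper: both reduce the theorem to the diminishing-step-size results of \cite{bottou2018optimization} (their Theorem 4.9 for \eqref{eq:liminf} and Corollary 4.12 for \eqref{eq:lim}) by checking that $\mathcal{L}$ is smooth and bounded below, that $g$ is an unbiased gradient estimate, and that \eqref{eq:variance_bound} supplies the required second-moment bound. The only differences are that you additionally unpack the descent-lemma/telescoping argument behind the citation and that you explicitly flag the subtlety the paper passes over silently---Lipschitz continuity of $\nabla_\theta\Psi_\theta$ alone does not make $\nabla\mathcal{L}$ globally Lipschitz unless $\Psi_\theta$ and its gradient are bounded (e.g., $\theta$ restricted to a compact set or bounded iterates)---so your version is, if anything, the more careful one.
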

\begin{proof}
    This follows from the general supervised learning convergence results presented in \cite[Theorem 4.9]{bottou2018optimization}. 
    
    For \eqref{eq:liminf} to hold, we must show the following: (1) the loss function $\mathcal{L}(\theta)$ is continuously differentiable with Lipschitz continuous derivatives, (2) the step direction is an unbiased estimate of the gradient descent direction, (3) the variance of this estimate is bounded, and (4) $\mathcal{L}(\theta)$ is bounded from below. 

    Condition (1) is satisfied as long as $\Psi_\theta$ is differentiable with Lipschitz derivatives. Condition (2) is trivially satisfied by SGD methods of the form \eqref{eq:sgd}. Condition (3) is more difficult to enforce a priori, and gives rise to the assumption \eqref{eq:variance_bound}. Finally, condition (4) is trivially satisfied by mean-squared-error loss functions of the form \eqref{eq:loss_function}.

    Under these same conditions, the stronger result \eqref{eq:lim} holds by \cite[Corollary 4.12]{bottou2018optimization} as long as $\nabla \mathcal{L}(\theta)$ is continuously differentiable and the mapping $\theta \mapsto \|\nabla \mathcal{L}(\theta)\|^2$ has Lipschitz-continuous derivatives: this is trivially satisfied if $\Psi_\theta$ is infinitely differentiable with respect to $\theta$.
\end{proof}

Intuitively, Theorem~\ref{thm:convergence} states that as long as the model $\Psi_\theta(x)$ is suitably smooth, Algorithm~\ref{alg:supervised} will converge to a local minimizer of \eqref{eq:loss_function}. See \cite{bottou2018optimization} for further technical detail.

This convergence theorem may seem somewhat academic: after all, a local minimizer of \eqref{eq:loss_function} does not necessarily give rise to a good policy. However, most learning-based optimal control methods do not admit even this weak guarantee. Neural fitted value iteration \cite{lutter2021value}, PPO \cite{schulman2017proximal}, deep deterministic policy gradients \cite{lillicrap2015continuous}, and other popular RL algorithms involve coupled data collection and optimization, which precludes these standard supervised learning guarantees. As a result, these techniques typically require considerable hyperparameter tuning to work well. Small changes in hyperparameters or even the random seed can cause the optimization process to diverge. In contrast, Algorithm~\ref{alg:supervised} shares the desirable numerical properties enjoyed by supervised learning in other domains.

\section{Examples}\label{sec:examples}

This section demonstrates our approach with two examples.

\subsection{Double Integrator}

\begin{figure}
    \centering
    \includegraphics[width=0.7\linewidth]{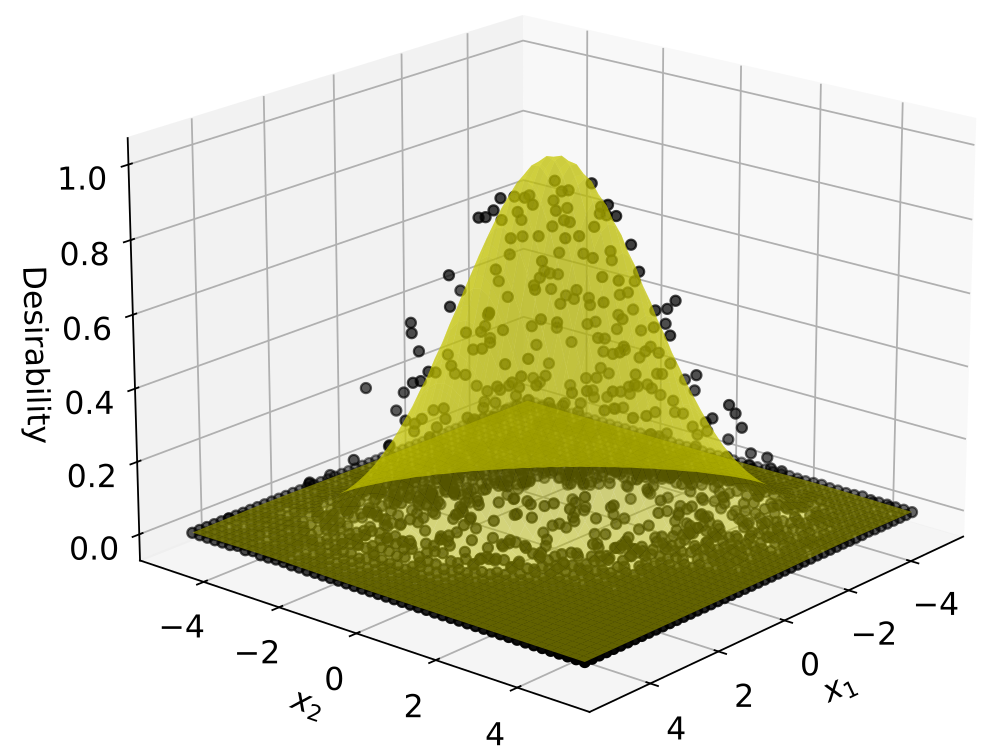}
    \caption{The true desirability function (yellow) and samples from the stochastic process \eqref{eq:desirability_sde} (black) for a double integrator. For systems where a closed-form solution is not available, these samples can serve as training data.}
    \label{fig:double_integrator}
\end{figure}

A simple double integrator has governing dynamics
\begin{equation*}
    dx_t = \left(\begin{bmatrix} 0 & 1 \\ 0 & 0 \end{bmatrix} x_t + \begin{bmatrix} 0 \\ 1 \end{bmatrix} u_t \right)dt + \begin{bmatrix} 0 \\ 1 \end{bmatrix} d\omega_t
\end{equation*}
where $x \in \mathbb{R}^2$ represents the position and velocity of a particle, and $u$ is the acceleration. We focus on a simple quadratic cost
\begin{equation*}
    \ell(x) = \phi(x) = x^Tx, \quad R = \begin{bmatrix} 1 \end{bmatrix}.
\end{equation*}
The optimal policy, value function, and desirability score for this system are readily available via LQG.

In Fig.~\ref{fig:double_integrator}, we plot the true desirability $\Psi(x)$ together with samples generated by simulating the diffusion process \eqref{eq:desirability_sde}. Note that these samples provide a noisy estimate of $\Psi(x)$, since \eqref{eq:desirability_expectation} holds only in expectation. Nonetheless, the samples from this stochastic process match the elongated shape of $\Psi(x)$. 

\subsection{Pendulum}

Next, we apply our proposed approach to a nonlinear system where a closed-form optimal policy is not readily available: the inverted pendulum. The pendulum dynamics are given by
\begin{equation*}
    dx_t = \left(\begin{bmatrix} \dot{\theta} \\  \frac{g}{l} \sin(\theta) \end{bmatrix} + \begin{bmatrix} 0 \\ \frac{1}{m l^2} \end{bmatrix} u_t \right)dt + \begin{bmatrix} 0 \\ 1 \end{bmatrix} d\omega_t,
\end{equation*}
where $x = [\theta, \dot{\theta}]^T$ and $\theta$ is the angular deviation from the upright position. The pendulum has mass $m=1$ kg, length $l=1$ m, and $g=9.81$ $\mathrm{m/s^2}$. The control inputs $u$ are torques. 

We specify the goal of balancing at the upright $\theta = 0$ with
\begin{equation*}
    \ell(x) = \phi(x) = \theta^2 + \frac{1}{10} \dot{\theta}^2, \quad R = \begin{bmatrix}1\end{bmatrix}.
\end{equation*}
Even without torque limits, the optimal policy swings the pendulum to inject energy into the system, as shown in Fig.~\ref{fig:closed_loop}.

\begin{figure}
    \centering
    \includegraphics[width=0.75\linewidth]{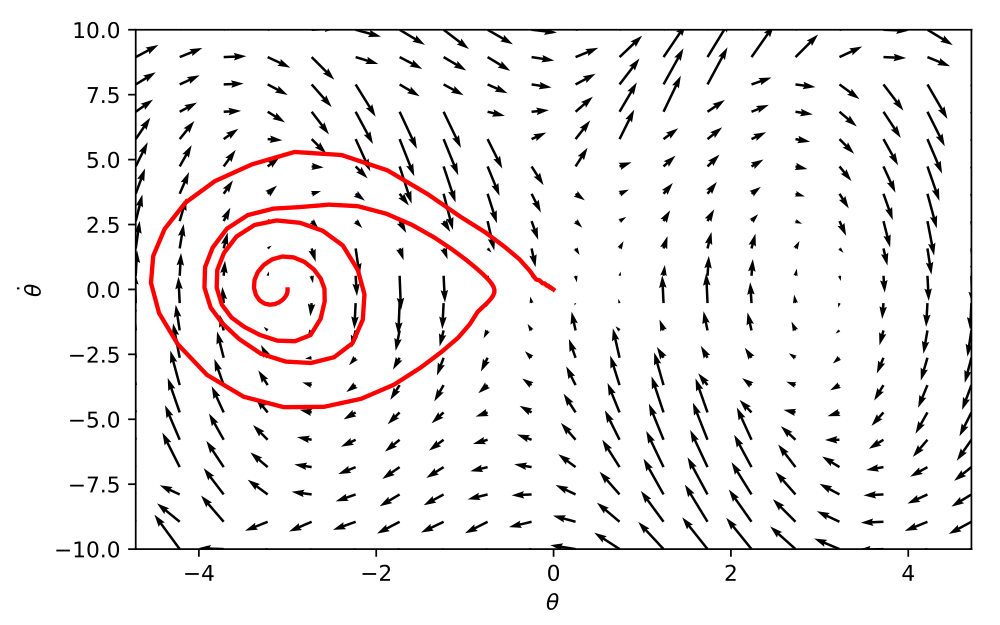}
    \caption{Closed loop vector field (black arrows) and simulated rollout (red) under the learned policy $\pi_\theta(x)$ for an inverted pendulum. This policy is generated in seconds using supervised learning, without access to demonstrations. }
    \label{fig:closed_loop}
\end{figure}

We apply Algorithm~\ref{alg:supervised} to learn an optimal policy, with $N = 10000$ data points, $M = 10$, and $\lambda = 20$. We simulate the diffusion process \eqref{eq:desirability_defn} with forward Euler integration and time step $\delta t = 0.01$ s for $T = 1.2$ seconds. Training data generated during this diffusion process is shown in Fig.~\ref{fig:diffusion_snaphots}. 

\begin{remark}
    $\lambda$ is the critical hyperparameter under this approach. If the value of $\lambda$ is chosen too small, $\Psi$ is dominated by a single uninformative peak. If chosen too large, the simulation rollouts become very noisy, requiring more samples and obscuring the structure of $\Psi$. 
\end{remark}

With this training data, we fit a neural network model of the desirability function $\Psi_\theta(x)$ in JAX. We used a simple fully-connected network with two hidden layers of 32 units and $\tanh$ activations. We applied 1000 epochs of Adam \cite{kingma2014adam} with an initial learning rate of $0.01$ and a batch size of 128. None of these hyperparameters are particularly optimized: this example serves merely as a proof-of-concept, and better results could almost certainly be obtained with a more careful application of supervised learning best practices. 

\begin{figure*}
    \centering
    \includegraphics[width=0.23\linewidth]{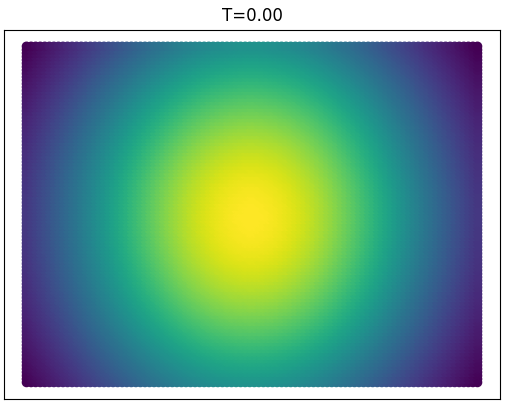}
    \includegraphics[width=0.23\linewidth]{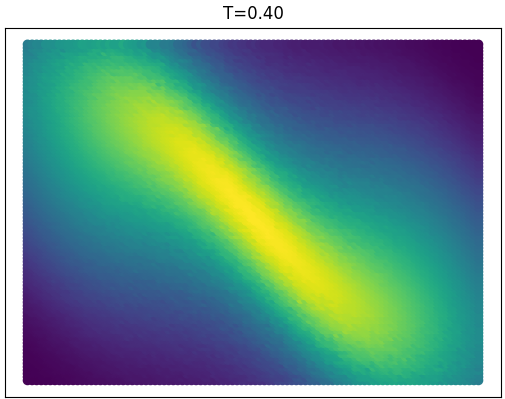}
    \includegraphics[width=0.23\linewidth]{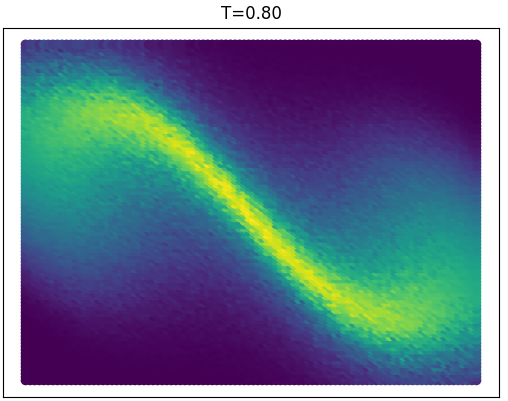}
    \includegraphics[width=0.23\linewidth]{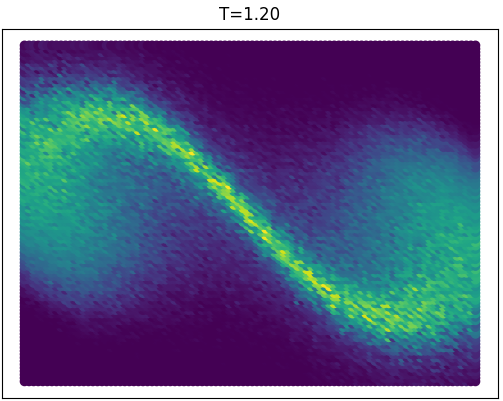}
    \caption{Snapshots of the noisy desirability (log value function) targets $\hat{\Psi}^i$ generated by the diffusion process \eqref{eq:desirability_sde}. These samples start out matching the terminal cost $\phi(x)$ at $T=0$ seconds. By $T=0.8$ seconds, the characteristic swirl of the value function is clearly visible.}
    \label{fig:diffusion_snaphots}
\end{figure*}

Altogether, the process of collecting training data and fitting the desirability function took a matter of seconds on a laptop. After this training process, applying the policy \eqref{eq:policy} in real time is trivial, as the desirability gradient $\nabla_x \Psi_\theta(x)$ can be efficiently computed by JAX's automatic differentiation. 

\section{Conclusion and Future Work}\label{sec:conclusion}

In this paper, we showed how a class of nonlinear stochastic optimal control problems can be cast as supervised learning, without access to an optimal policy or demonstrations. In particular, we demonstrated how samples from the (log) optimal value function can be generated by simulating a stochastic process, via the Feynman-Kac theorem. Regression on these samples approximates the optimal value function and therefore the optimal policy. 
Future work will focus on scaling these techniques, expanding the class of nonlinear systems that can be considered, and developing software and numerical methods that scale to challenging optimal control problems, particularly in robot locomotion and contact-rich manipulation.

\bibliographystyle{IEEEtran}
\bibliography{references}

\end{document}